\newcommand{\remove}[1]{}
\newcommand{\ce}{\mathrm{e}}
\newcommand{\size}{\mathrm{size}}
\renewcommand{\deg}{\mathrm{deg}}
\renewcommand{\leq}{\leqslant}
\renewcommand{\geq}{\geqslant}
\newcommand{\thmref}[1]{Theorem~\ref{thm:#1}}
\newcommand{\lemref}[1]{Lemma~\ref{lem:#1}}
\newcommand{\eq}[1]{\eqref{eq:#1}}
\newcommand{\Pro}[1]{\mathbf{Pr} \left[\,#1\,\right]}
\newcommand{\e}{\mathbf{E}}
\renewcommand{\epsilon}{\varepsilon}
\newtheorem{thm}{Theorem}  
\newtheorem{lem}[thm]{Lemma}
\newtheorem{cor}[thm]{Corollary}
\newtheorem{pro}[thm]{Proposition}
\def\ce{\mathrm{e}}
\def\auto{\mathrm{auto}}
\def\auto{\mathrm{auto}}
\newcommand{\ee}{\mathbf{E}}
\title{{ Counting Hypergraphs in Data Streams}}
\author{He Sun\\
Max Planck Institute for Informatics\\
Saarbr\"{u}cken, Germany\\
\texttt{hsun@mpi-inf.mpg.de}}
\date{}
\begin{document}

\maketitle
\begin{abstract}
We  present the first streaming algorithm for counting  an arbitrary hypergraph $H$ of constant size in a massive hypergraph $G$. Our algorithm can handle both edge-insertions and edge-deletions, and is applicable for the distributed setting. Moreover, our approach provides the first family of graph polynomials for the hypergraph counting problem. Because of the close relationship between hypergraphs and set systems, our approach may have applications in studying similar problems.

\end{abstract}

\section{Introduction}

The problem of counting subgraphs is one of the fundamental questions in algorithm design, and has various applications in analyzing the clustering and transitivity coefficients of networks, uncovering structural information of graphs that model biological systems, and designing graph databases. While the exact counting of subgraphs of constant size is polynomial-time solvable, traditional algorithms need to store the whole graph and compute the solution in an off-line fashion, which is not practical even for graphs of medium size. A modern way to deal with this problem is to design algorithms in the  streaming setting, where the edges of the underlying graph come sequentially in an arbitrary order, and algorithms with sub-linear space are required to approximately count the number of occurrences of certain subgraphs. Since the first streaming $\mbox{algorithm}$ by \citet{conf/soda/Bar-YossefKS02}, this problem has received much attention in recent years~\cite{conf/cocoon/JowhariG05,conf/esa/BuriolFLS07,
conf/esa/ManjunathMPS11,conf/icalp/KaneMSS12,
conf/icdm/BordinoDGL08,conf/kdd/BecchettiBCG08,conf/soda/Bar-YossefKS02,
conf/pods/AhnGM12}.

We address the subgraph counting problem for  hypergraphs. Formally, we are given a sequence of sets
$s_1,s_2,\ldots$ in a data stream. These sets,
each of which consisting of vertices of the underlying hypergraph $G$, arrive sequentially and represent  edges of a hypergraph  $G=(V,E)$.
Moreover, every coming edge $e_i$ is equipped with a  sign~(``$+$" or ``$-$"), indicating that edge $e_i$ is inserted to or deleted from the hypergraph $G$. That is, we study the so-called \emph{turnstile model}~\cite{DBLP:journals/fttcs/Muthukrishnan05} where the underlying graph may change over time. For any hypergraph $H$ of constant size, algorithms with sub-linear space are required to approximate the number of occurrences of $H$ in $G$.

%
%
%

\paragraph{Motivation.}  Hypergraphs are  basic models to characterize precise relations $\mbox{among}$ items of data sets. For the study of databases, people started to use $\mbox{hypergraphs}$ to model database schemes since 1980s~\cite{journals/jacm/Fagin83,conf/stoc/BeeriFMMUY81}, and this line of research led to several well-known  data storage mechanisms like \textsf{HyperGraphDB}~\cite{hypergraphDB}.
Besides  database theory, a number of  studies have shown that simple graphs\footnote{For ease of our discussion simple graphs refer to graphs where every edge consists of two vertices.},
representing pairwise relationships,
 are usually not  sufficient to encode all information when studying social, protein, or biological networks, and suggested to use hypergraphs to model the real relations among the items.
 For illustrating this point of view, let us look at the coauthor network for example. In a coauthor network, authors are represented as vertices of a graph, and an edge between two authors exists iff these two persons are co-authors. This natural model misses the information on whether a set of three (or more) authors have been co-authored of the same article. Such information loss is undesirable for many applications, e.g., for detecting communities or clusters like all authors that worked in the same research area.
 Similar problems occur in studying biological, social, and other networks when hypergraphs are required in order to express the complete relation among entities~\cite{journals/PCB/KUT09,conf/nips/ZhouHS06}.

%
%
%

\paragraph{Our Results \& Techniques.}
 We initiate the study of counting subgraphs in the streaming setting, and present the first algorithm for this problem.
 Although the subgraph counting problem is much more difficult for the case of hypergraphs and streaming algorithms were unknown even for the edge-insertion case prior to our work,  our algorithm runs in the general turnstile model, and is applicable in the distributed setting.
Formally, for any fixed subgraph $H$ of constant size, our algorithm $(1\pm\varepsilon)$-approximates the number of occurrences of $H$ in $G$. That is, for any constant $\varepsilon\in(0,1)$, the output of our algorithm satisfies $Z\in[(1-\varepsilon)\cdot\#H, (1+\varepsilon)\cdot\#H]$ with probability at least $2/3$.   The main result of our paper is as follows:

\begin{thm}[Main Result]\label{thm:main}
Let $G$ be a hypergraph of $n$ vertices and $m$ edges, and $H$ a hypergraph of $k$ edges and minimum degree at least 2. Then there is an algorithm to $(1\pm\varepsilon)$-approximate the number of occurrences of $H$ in $G$ that uses $O\left(\frac{1}{\varepsilon^2}\cdot\frac{m^k}{(\#H)^2}\cdot\log n\right)$ bits of space. The update time per coming edge is $O\left(\frac{1}{\varepsilon^2}\cdot\frac{m^k}{(\#H)^2}\right)$.
Our algorithm works in the turnstile model.
\end{thm}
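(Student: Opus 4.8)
The plan is to build an \emph{unbiased estimator} $X$ for $\#H$ in the form of a random graph polynomial that can be maintained as a small collection of \emph{linear} sketches, and then to drive its variance down to $\varepsilon^2(\#H)^2$ by averaging many independent copies. Writing $x_e\in\{0,1\}$ for the net indicator that a potential edge $e$ is present after all turnstile updates, the target quantity is the degree-$k$ multilinear polynomial
\[
  \#H \;=\; \frac{1}{|\auto(H)|}\sum_{\phi}\prod_{j=1}^{k} x_{\phi(f_j)},
\]
where $f_1,\dots,f_k$ are the edges of $H$ and $\phi$ ranges over injective vertex maps $V(H)\to V(G)$. Since each linear sketch is updated additively, insertions and deletions are handled symmetrically, so the whole scheme lives in the turnstile model; the only remaining issues are correctness of the estimator and its second moment.

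First I would construct a single estimator. Attach to every pair (vertex $v$ of $G$, vertex $a$ of $H$) an independent mean-zero random root of unity $\lambda_{v,a}$, and for each edge $f_j$ of $H$ form a linear sketch $Z_j=\sum_e x_e\, w_j(e)$, whose weight $w_j(e)$ records, through these variables, how a present edge $e$ of matching arity could play the role of $f_j$. The estimator is the suitably normalised product $X=c\prod_{j=1}^{k} Z_j$. Expanding the product, each term selects one present edge of $G$ per edge of $H$, and a term survives in expectation only when the variables attached to shared vertices cancel consistently. Here the hypothesis that $H$ has minimum degree at least $2$ is essential: every vertex of a candidate copy lies in at least two of the selected edges, so its random variable occurs with multiplicity $\ge 2$ and the copy is \emph{pinned down by its edges alone}, leaving no free degree-one vertex to be sampled separately. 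The work of this step is the combinatorial bookkeeping over the intersection patterns of hyperedges, verifying that genuine copies of $H$ each contribute exactly $1$ in expectation while every spurious alignment contributes $0$, so that $\ex{X}=\#H$.

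The crux of the analysis is the second-moment bound $\ex{|X|^2}=\Oh(m^k)$. I would expand $|X|^2$ as a sum over \emph{pairs} of edge-selections; using that the $\lambda_{v,a}$ are roots of unity (so each $|w_j(e)|=1$) with vanishing cross-moments, the surviving contributions are bounded by the number of ways to pick $k$ edges of $G$, i.e.\ $\Oh(m^k)$. Again the hyperedge-overlap cases must be enumerated, and controlling the dominant case is where I expect the real difficulty to lie: it is the same phenomenon as in the graph setting but now complicated by edges of varying arity and richer intersection patterns. Granting this bound, take $t=\Theta\!\left(\tfrac{1}{\varepsilon^2}\cdot\tfrac{m^k}{(\#H)^2}\right)$ independent copies $X_1,\dots,X_t$ and output (the real part of) $\bar X=\tfrac1t\sum_{i=1}^t X_i$. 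Then $\Var{\bar X}\le \ex{|X|^2}/t\le \tfrac13\,\varepsilon^2(\#H)^2$, and Chebyshev's inequality gives $\Pro{\,|\bar X-\#H|>\varepsilon\,\#H\,}\le 1/3$, which is the claimed guarantee.

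Finally I would account for the resources. Each of the $k=\Oh(1)$ sketches $Z_j$ is a single complex number of magnitude $\poly(n)$, hence $\Oh(\log n)$ bits, and the variables $\lambda_{v,a}$ can be produced by a hash family of limited independence stored in $\Oh(\log n)$ space, so one estimator costs $\Oh(\log n)$ bits and the $t$ estimators cost $\Oh\!\left(\tfrac1{\varepsilon^2}\cdot\tfrac{m^k}{(\#H)^2}\cdot\log n\right)$ bits in total. An incoming update adds or subtracts one weighted term to each of the $t\cdot k$ sketches, giving update time $\Oh\!\left(\tfrac1{\varepsilon^2}\cdot\tfrac{m^k}{(\#H)^2}\right)$; and because the sketches are linear, deletion-tolerance and the distributed/mergeable property follow immediately.
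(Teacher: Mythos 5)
Your proposal follows the paper's general framework (one linear sketch per edge of $H$ built from random roots of unity, a product estimator, averaging plus Chebyshev), but it has a genuine gap at the heart of the unbiasedness claim. Variables attached to pairs (vertex of $G$, vertex of $H$) can only certify that the selected edge configuration is a \emph{homomorphic image} of $H$; they cannot certify injectivity, and the minimum-degree-$2$ hypothesis does not repair this. Concretely, take $H=C_4$ (four vertices, four edges of size two, every degree exactly $2$) and let $G$ contain a single edge $\{u,v\}$. Choosing that one edge in all four factors $Z_1,\ldots,Z_4$, with alternating orientations, maps the four vertices of $H$ two-to-one onto $\{u,v\}$; every vertex $a$ of $H$ then contributes its variable with full multiplicity $\deg_H(a)=2$, so the term has nonvanishing expectation even though $G$ contains no copy of $C_4$. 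Thus your estimator, as described, is biased: "pinned down by its edges alone" is exactly what fails, since collapsed (non-injective) homomorphisms also pin every variable consistently. The paper handles this with a second, independent mechanism that you are missing: a hash $Y:V[G]\to\{1,2,4,\ldots,2^{t-1}\}$ and a random $(2^t-1)$th root of unity $Q$, with each update multiplied by $Q^{Y(u_i)/\deg_H(c_i)}$; a number-theoretic lemma shows $\ee\big[Q^{\sum_c Y(\varphi(c))}\big]\neq 0$ iff the image of $\varphi$ has exactly $t$ distinct vertices. Genuine copies then survive only with probability $t!/t^t$, which is why the paper's normalization is $t^t/(t!\cdot\auto(H))$ rather than your $1/|\auto(H)|$.

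The second gap is the second-moment bound, which is precisely where the minimum-degree hypothesis actually enters the paper (it plays no role in unbiasedness there). Expanding $\ee[|X|^2]$ over pairs of $k$-tuples of edges gives $m^{2k}$ terms, and unit modulus of the weights only yields $\Oh(m^{2k})$, not $\Oh(m^k)$; your sentence "bounded by the number of ways to pick $k$ edges of $G$" conflates the single expansion with the pair expansion, and you yourself flag this step as the expected difficulty. The paper's argument is: a surviving pair $(\overrightarrow{T_1},\overrightarrow{T_2})$ induces a hypergraph with at most $2k$ edges in which every vertex has degree at least $2$ (inherited from the minimum degree of $H$), and a fractional-edge-cover bound (Friedgut--Kahn: a hypergraph with $k'$ edges and minimum degree $2$ has $\Oh(m^{k'/2})$ occurrences) applied with $k'=2k$ gives $\Oh(m^k)$ surviving configurations. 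Without this lemma your space bound would degrade to $\Oh\big(\varepsilon^{-2}m^{2k}(\#H)^{-2}\log n\big)$. So: same architecture, but the injectivity test (and its normalization constant) is missing, and the $\Oh(m^k)$ variance bound is asserted rather than proved.
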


To compare our algorithm with  na\"ive methods, note that a na\"ive approach for counting  $\# H$
needs to
either sample independently $k$ vertices (if possible) or $k$ edges
from the stream. Since the probability of $k$ vertices (or $k$ edges) forming $H$ is $\# H/{n^k}$  (or $\# H/{m^k}$), this approach needs space
$\Omega\left(\frac{n^k \log n}{ \#H} \right)$ and $\Omega\left(\frac{m^k \log
n}{\#H}\right)$, respectively. Thus our algorithm has significant improvement over the na\"ive approach.
On the other hand,  we note that for any graph $G$ of $m$ edges, and hypergraph $H$ of $k$ edges, the number of $H$ in $G$ can be as big as $\Omega(m^{k/2})$. Hence for dense graphs with $\# H=\omega\left(m^{\frac{k-1}{2}}\right)$, our algorithm achieves a $(1+\varepsilon)$-approximation in sublinear space.

Our algorithm uses the composition of complex-valued random variables.
Besides presenting the first  hypergraph counting algorithm in the streaming setting,  our approach yields a family of graph polynomials $\{p_H\}$ to count the number of hypergraph $H$ in  hypergraph $G$. That is, for any hypergraph $H$ the polynomial $p_H$ takes hypergraph $G$ as an argument, and the value of $p_H(G)$ is the number of isomorphic copies of $H$ in $G$.
This is the first family of graph polynomials for the hypergraph counting problem, and  the techniques developed here  may have applications
 in studying graph theory or related topics.

\begin{thm}\label{thm:graphpoly}
For any hypergraph $H$, there is a graph polynomial $p_H(\cdot)$ such that for any hypergraph $G$, $p_H(G)\in\mathbb{N}\cup\{0\}$ is the number of isomorphic copies of $H$ in $G$.
\end{thm}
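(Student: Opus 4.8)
The plan is to realize $p_H$ through the complex-valued character method sketched in the introduction, but to push the orthogonality of roots of unity all the way to an \emph{exact} counting identity rather than stopping at an unbiased estimator. Write $V(H)=\{1,\ldots,\ell\}$ and $E(H)=\{f_1,\ldots,f_k\}$, and fix a prime $p$ larger than the largest degree and the total number of incidences of $H$. To each vertex $v$ of the ambient universe I attach an independent variable $X_v=\omega^{a_v}$ with $\omega=\ce^{2\pi\ii/p}$ and $a_v$ uniform in $\Z_p$, and I record the single orthogonality identity $\ex{X_v^{t}}=\mathbf{1}[\,p\mid t\,]$, which is the only probabilistic fact the argument uses.

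First I would attach to every incidence $(i,f_j)$ of $H$ (every pair with $i\in f_j$) an integer exponent $c_{i,j}$, chosen so that for each vertex $i$ the multiset $\{c_{i,j}: i\in f_j\}$ sums to $0\pmod p$ while no proper sub-multiset does. For a candidate placement — a map $\psi\colon V(H)\to V(G)$ with $\psi(f_j)\in E(G)$ for every $j$ — I form the monomial $\prod_{j}\prod_{i\in f_j}X_{\psi(i)}^{c_{i,j}}$, so that the total exponent accumulated on a $G$-vertex $w$ is the sum of all $c_{i,j}$ routed to $w$. Assembling these placements as the degree-$k$ product $\prod_{j=1}^{k}\Bigl(\sum_{e\in E(G)}\sum_{\beta\colon f_j\to e}\prod_{i\in f_j}X_{\beta(i)}^{c_{i,j}}\Bigr)$, where $\beta$ ranges over bijections of $f_j$ onto the vertices of the edge $e$, and applying orthogonality term by term, the expectation collapses to exactly those placements in which every $G$-vertex carries exponent $\equiv 0\pmod p$; by the choice of the $c_{i,j}$ this occurs precisely when the chosen edges glue according to the incidence structure of $H$, i.e. when they span a sub-hypergraph admitting an $H$-homomorphism with the prescribed edge-correspondence.

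This expectation is a fixed deterministic quantity computed from $G$ — a polynomial in the edge-indicator data of $G$ — and it becomes $p_H(G)$ after two corrections. The first is \emph{injectivity}: the character sum counts all homomorphisms, including those collapsing several vertices of $H$ onto one vertex of $G$, so I would remove them by Möbius inversion over the partition lattice of $V(H)$, replacing $H$ by its quotients and taking the standard alternating combination, which is again a polynomial. The second is the \emph{automorphism normalization}: dividing the number of injective correspondence-preserving maps by $|\mathrm{Aut}(H)|$ converts labelled homomorphic images into unlabelled isomorphic copies, giving a value in $\N\cup\{0\}$. Since every operation above — summation over edges, multiplication of $k$ factors, a finite character average, Möbius inversion, and division by the constant $|\mathrm{Aut}(H)|$ — preserves being a polynomial in the presence/absence variables of the edges of $G$, the composite $p_H$ is a graph polynomial with the claimed value.

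The main obstacle I anticipate is the exponent design together with the degenerate-overlap bookkeeping: I must guarantee that the per-vertex divisibility condition holds \emph{iff} the edges glue exactly as in $H$, with no accidental coincidences — two pattern edges sharing an unintended image vertex, or a $G$-vertex receiving incidences from pattern vertices that $H$ keeps apart. Ruling out these spurious cancellations is what forces both the primality choice of $p$ and the inclusion–exclusion over the partition lattice. A secondary point requiring separate care is that \thmref{graphpoly} permits degree-$1$ vertices of $H$, whose single attached exponent cannot be balanced modulo $p$; for these I would not attempt detection by orthogonality at all, but instead marginalize over the free vertex of the image edge and fold the resulting count into the injectivity correction. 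Once the exponents are fixed and the lattice correction is in place, verifying $p_H(G)=\#H$ is a direct, if lengthy, orthogonality computation.
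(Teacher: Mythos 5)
Your plan is a genuine alternative to the paper's route, but as written it has a gap at its crux. The orthogonality step does not do what you claim under the condition you impose. You only require a \emph{per-vertex} property of the exponents (each star $\{c_{i,j}\colon i\in f_j\}$ sums to $0\pmod p$, no proper sub-multiset does), but the exponent that accumulates at a $G$-vertex $w$ is a sum of incidences contributed by \emph{several different} pattern vertices, one from each pattern edge whose bijection hits $w$. Nothing you impose forbids a partial star of $i$ and a partial star of $i'$ from cancelling modulo $p$, so the expectation does not collapse ``precisely'' onto gluing placements. Moreover, your stated remedy is backwards: Möbius inversion over the partition lattice converts homomorphism counts into injective-homomorphism counts, so it can only reprocess counts of maps that \emph{are} homomorphisms of some quotient of $H$; it is powerless against spurious terms that are not homomorphisms of anything. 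What you actually need is a global dissociativity condition --- for every incidence set $A$ containing at most one incidence per pattern edge, $\sum_{(i,j)\in A}c_{i,j}\equiv 0 \pmod p$ if and only if $A$ is a union of complete vertex stars --- together with a construction achieving it. Such a construction exists (e.g.\ give vertex $i$ exponents $B^i$ with one balancing incidence $-(\deg_H(i)-1)B^i$ per star, where the base $B$ exceeds all degrees, and take $p$ larger than any achievable subset sum, so that vanishing mod $p$ forces vanishing over $\mathbf{Z}$ and hence digitwise vanishing), but it must be stated and proved, since it is the heart of the argument. Your treatment of degree-$1$ vertices (``marginalize over the free vertex \ldots fold into the injectivity correction'') is likewise not an argument, although here the situation is more forgiving than you fear: a degree-$1$ vertex lies in a single pattern edge, so it has no gluing constraint to verify, and simply setting its exponent to $0$ is already correct.

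For contrast, the paper obtains Theorem~\ref{thm:graphpoly} as an immediate corollary of Theorem~\ref{main_bheo}: the polynomial is $p_H(G)=\frac{t^t}{t!\cdot\auto(H)}\cdot\ee[Z_H(G)]$, the expectation being an explicit finite sum over the values of the hash functions. The structural difference from your plan is that the paper never counts non-injective maps and then subtracts: a second layer of randomness --- $Y$ mapping $G$-vertices into $\{1,2,4,\ldots,2^{t-1}\}$ and a $(2^t-1)$th root of unity $Q$, combined through \lemref{numbertheorylemma} --- annihilates, inside the expectation, every term whose vertex image has fewer than $t$ vertices. Bijectivity is thus enforced by the estimator itself, and only the normalization $t!/t^t$ and the factor $\auto(H)$ remain. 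Your route replaces this mechanism with a partition-lattice inversion, which is legitimate in principle, but it shifts the entire burden onto the exponent design --- exactly the step your proposal leaves open.
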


Our algorithm follows the framework by \citet{conf/icalp/KaneMSS12}. For any hypergraph $H$ of $k$ edges, we  maintain $k$ variables $Z_{e_1^{\star}},\ldots, Z_{e^{\star}_k}$, and each variable $Z_{e^{\star}_i}$ corresponds to  one edge  in $H$. For every coming edge $e$ in graph $G$, we choose one or more $Z_{e_i^{\star}}$ to update according to the value of hash functions. We will prove that the returned value of $\prod_{1\leq i\leq k} Z_{e_i^{\star}}$ is unbiased.
However, in contrast to the simple graph case, the algorithm for hypergraphs and the analysis is much more complicated due to the following reasons:
\begin{enumerate}
\item In contrast to simple graphs, subgraph isomorphoism between hypergraphs is more difficult to handle, and hence
the update procedure for every coming edge is more involved. To overcome this, for every coming edge $e$ of hypergraph $G$ that consists of $\ell$ edges, we look at $\ell !$ permutations of $\{1,\ldots,\ell\}$, and every such permutation gives $e$ an ``orientation".
Moreover, instead of  updating every $Z_{e^{\star}_i}$ simultaneously for the simple graph case, we
choose one or more $Z_{e^{\star}_i}$ to update. Through this, we  prove that the returned value of our estimator is unbiased for the number of occurrences of $H$ in $G$.

\item The second difficulty for dealing with hypergraphs comes from analyzing the concentration of the estimator.  All previous works on the subgraph counting problem, e.g.~\cite{conf/cocoon/JowhariG05,conf/esa/ManjunathMPS11,conf/icalp/KaneMSS12}, indicate that the space requirement of the algorithm depends on the number of other subgraphs in the underlying graph. For instance, the space complexity of the algorithms by \cite{conf/esa/ManjunathMPS11,conf/icalp/KaneMSS12,conf/cocoon/JowhariG05} is essentially determined by the number of closed walks of certain length in  graph $G$. However, the notion of closed walks in (non-uniform) hypergraphs is not well-defined, and hence we need to use alternative methods to analyze the concentration of the estimator, as well as the space requirement.
\end{enumerate}
 Because of these differences, our generalization is non-trivial and elegant.  Our result (\thmref{main}) shows that the regularity of hyperedges in  $G$ and $H$ does not influence the actual space complexity of the algorithm, and the time and space complexity of our algorithm is the same as the simple graph case.

\paragraph{Notation.}
Let $G=(V,E)$ be a hypergraph graph. The set of vertices and edges are represented by $V[G]$ and $E[G]$. We  assume that graph $G$ has $n$ vertices, and $n$ is known in advance.
Graph $G$ is called a hypergraph if every edge $e\in E[G]$ is a non-empty subset of $V[G]$, i.e. $E[G]$ is a subset of the power set of $V[G]$.
For any hypergraph $G$ and  vertex $u\in V[G]$, the degree of $u$, expressed by $\deg(u)$, is the number of edges that include $u$.
Moreover, the size of edge $e\in E[G]$, denoted by $\mathrm{size}(e)$, is the number of vertices contained in $e$.

Given two  hypergraphs $H_1$ and $H_2$, we say that $H_1$ is \emph{homomorphic} to $H_2$ if there is a
mapping $\varphi:V[H_1]\mapsto V[H_2]$ such that
 for any set $D\subseteq V[H_1]$, $D\in E[H_1]$ implies $\{\varphi(u): u\in D \}$ is in $E[H_2]$. We say that $H_1$ is \emph{isomorphic} to $H_2$ if the above function $\varphi$ is a bijection.
For any hypergraph $H$, the automorphism of $H$ is an isomorphism from $V[H]$ into $V[H]$. Let $\auto(H)$ be the number of automorphisms of  $H$.
For any hypergraph $H$, we call a  subgraph $H_1$ of $G$
 that is not necessarily induced
 an \emph{occurrence} of $H$, if $H_1$
is isomorphic to $H$. Let $\#(H,G)$ be the number of
occurrences of $H$ in $G$.

Let $\mathbb{S}_{\ell}$ be a permutation group of $\ell$ elements.
A $k$th root of unity is any number of the form $\ce^{2\pi\mathrm{i}\cdot j/k}$, where $0\leq j<k$.

\section{An Unbiased Estimator for Counting Hypergraphs}

Throughout the rest of the paper we assume that hypergraph $G$ has $n$ vertices and $m$ edges, and hypergraph $H$ has $t$ vertices and $k$ edges. For the notation, we denote vertices of $G$ by $u,v$ and $w$, and vertices of $H$ are denoted by $a,b$ and $c$.  For every edge $e^{\star}$ of $H$, we give the vertices in $e^{\star}$ an arbitrary ordering and call this \emph{oriented} edge $\overrightarrow{e^{\star}}$. For simplicity and with slight abuse of notation we will use $H$ to express such an oriented hypergraph.

At a high level, our estimator maintains $k$ complex variables $Z_{\overrightarrow{e^{\star}}}$, $e^{\star}\in E[H]$. These complex variables correspond to $k$ edges of hypergraph $H$, and  are set to zero initially. For every arriving edge $e\in E[G]$ with $\size(e)=\ell$, we update every $Z_{\overrightarrow{e^{\star}}}$ with $\size(e^{\star})=\size(e)$ according to
$$
Z_{\overrightarrow{e^{\star}}}(G) \leftarrow
Z_{\overrightarrow{e^{\star}}}(G) + \sum_{(\sigma(1),\ldots,\sigma(\ell))\in \mathbb{S}_{\ell}}
M_{\overrightarrow{e^{\star}}}(u_{\sigma(1)},\ldots,u_{\sigma(\ell)}),
$$
where the summation is over all possible permutations of $(1,\ldots,\ell)$, and $M_{\overrightarrow{e^{\star}}}:(V[G])^{\ell}\mapsto \mathbf{C}$ can be computed in constant time.  Hence we can rewrite $Z_{\overrightarrow{e^{\star}}}$ as
$$
Z_{\overrightarrow{e^{\star}}}(G) =\sum_{\substack{e\in E[G] \\ \size(e)=\size(e^{\star}) }} \sum_{(\sigma(1),\ldots,\sigma(\ell))\in \mathbb{S}_{\ell}}
M_{\overrightarrow{e^{\star}}}(u_{\sigma(1)},\ldots,u_{\sigma(\ell)}).
$$
Intuitively $M_{\overrightarrow{e^{\star}}}(u_{\sigma(1)},\ldots,u_{\sigma(\ell)})$ expresses the event to give edge  $e=\{u_1,\ldots,u_{\ell}\}$ in $G$ an orientation according to a permutation $(\sigma(1),\ldots, \sigma(\ell))$, and map this \emph{oriented} edge $\overrightarrow{e}$ to $\overrightarrow{e^{\star}}$. When the number of subgraph $H$ is asked, the algorithm  outputs the real part of $\alpha\cdot\prod_{\overrightarrow{e^{\star}}} Z_{\overrightarrow{e^{\star}}}$, where $\alpha\in\mathbf{R}^+$ is a scaling factor and will be determined later.

More formally, each $M_{\overrightarrow{e^{\star}}}(u_1,\ldots,u_{\ell})$ is defined according to  the degree of vertices  in graph $H$ and
 determined by three types of random variables $Q, X_{c}(w)$ and $Y(w)$, where
 $c\in V[H]$ and $w\in V[G]$: (1)  Variable $Q$ is a random $\tau$th root of unity, where $\tau:=2^t-1$. (2)  For vertex $c\in V[H], w\in V[G]$, $X_{c}(w)$ is random $\deg_H(c)$th root of unity, and for each vertex $c\in V[H]$, $X_{c}:V[G] \rightarrow \mathbf{C}$ is chosen
     independently and uniformly at random from a family of $(2t\cdot k)$-wise independent hash functions, where $2t\cdot k=O(1)$. Variables $Q$ and $X_{c}~(c\in V[H])$ are chosen independently. (3) For every $w\in V[G]$, $Y(w)$ is a random element chosen from $S:=\left\{1,2,4,8,\dots, 2^{t-1}\right\}$ as part of a $4k$-wise independent hash function. Variables $Y(w)~(w\in V[G])$ and $Q$ are chosen independently.

Given these, for every edge $\overrightarrow{e^{\star}}=(c_1,\ldots,c_{\ell})$ we define the function $M_{\overrightarrow{e^{\star}}}$ as
$$
M_{\overrightarrow{e^{\star}}}(u_1,\ldots,u_{\ell}):
=\prod_{1\leq i\leq\ell} \left(X_{c_i}(u_i)\cdot Q^{\frac{Y(u_i)}{ \deg_H(c_i) }}\right).$$
See Estimator~1 for the formal description of the update and query procedures.

\floatstyle{ruled}
\newfloat{algorithm}{htb}{loa}
\floatname{algorithm}{Estimator}
\begin{algorithm}[htb]

\vspace{0.5em}

 \underline{\textsf{Update Procedure:}} When an edge $e=\{u_1,\ldots, u_{\ell}\} \in E[G]$ arrives, update each $Z_{\overrightarrow{e^{\star}_j}}$ with $\mathrm{size}(e^{\star}_j)=\ell$ w.r.t.
\begin{align}\label{update_formula}
Z_{\overrightarrow{e^{\star}_j}}(G)\leftarrow & Z_{\overrightarrow{e^{\star}_j}}(G)+ \sum_{(\sigma(1),\ldots,\sigma(\ell))\in\mathbb{S}_{\ell}}
M_{\overrightarrow{e^{\star}_j}}\left(u_{\sigma(1)},\ldots, u_{\sigma(\ell)}\right).
\end{align}

\underline{\textsf{Query Procedure:}} When $\#(H,G)$ is required, output the real part of
\begin{equation}\label{returvalue}
\frac{t^t}{t!\cdot \auto(H)}\cdot Z_{H}(G)\enspace,
\end{equation}
where $Z_H(G)$ is defined by
\begin{equation}\label{define_Y}
Z_H(G):=\prod_{\overrightarrow{e^{\star}}\in E[H]}Z_{\overrightarrow{e^{\star}}}(G)\enspace.
\end{equation}
\caption{Counting $\#(H,G)$\label{generalalgo}}
\end{algorithm}

Before analyzing the algorithm, let us briefly discuss some properties of our algorithm.
First, the estimator runs in the turnstile model. For simplicity
we only write the update procedure for the edge insertion case. For every coming item that represents an edge-deletion, we replace ``$+$" by ``$-$" in (\ref{update_formula}).
Second, our estimator works in the distributed setting, where there are several distributed sites, and each site receives a stream $S_i$ of
hyperedges. For such settings every local site does the same for coming edges in the local stream $S_i$ .  When the number of subgraphs is asked, these  sites cooperate to give an approximation of $\#(H,G)$ for the underlying graph $G$ formed by $\bigcup_{i} S_i$.
Third, we can generalize Estimator~1 to the labelled graph case. Namely, there are labels for every vertex (and/or edge) in $G$ and $H$, and the algorithm can count the number of isomorphic copies of $H$ in $G$ whose labels are the same as $H$'s.

\section{Analysis of the Estimator}\label{sec:correctnessproof}

In this section, we first prove that  $Z_H(G)$ defined by (\ref{define_Y}) is an unbiased estimator for $\#(H,G)$. Then, we analyze the variance of the estimator and the space requirement of our algorithm in order to achieve a $(1\pm\varepsilon)$-approximation.

We first explain the intuition behind our estimator.  By
(\ref{update_formula}) and (\ref{define_Y})
 we have
\begin{align}\label{eq:expresszh}
Z_H(G)
  &= \prod_{\overrightarrow{e^{\star}}\in E[H]}\left[\sum_{
 \substack{e\in E[G]\\\mathrm{size}(e)=\mathrm{size}(\overrightarrow{e^{\star}})\\e=\{u_1,\ldots,u_{\ell}\}}}\
 \sum_{(\sigma(1),\ldots,\sigma(\ell))\in\mathbb{S}_{\ell}}
 M_{\overrightarrow{e^{\star}}}\left(u_{\sigma(1)},\ldots, u_{\sigma(\ell)} \right)\right]
 \enspace.
\end{align}
Since $H$ has $k$ edges, $Z_H(G)$  is a product of $k$ terms, and each term $Z_{\overrightarrow{e^{\star}}}(G)$ is a sum over all possible edges $e$ of $G$ with $\size(e)=\size(e^{\star})$ together with all possible orientations of $e$. Hence, in the expansion of $Z_H(G)$, any $k$-tuple $(e_1,\ldots,e_k)\in E^k(G)$ with $\size(e_i)=\size(e_i^{\star})$ contributes $\prod_{1\leq i\leq k} \left(\size(e_i)!\right)$ terms to $Z_H(G)$, and each term corresponds to a certain orientation of edges $e_1,\ldots,e_k$.

Let $\overrightarrow{T}=(\overrightarrow{e_1},\ldots, \overrightarrow{e_{k}})$ be an arbitrary orientation of $(e_1,\ldots,e_k)$, and let $G_{\overrightarrow{T}}$ be the graph induced by $\overrightarrow{T}$.
Our algorithm relies on three types of variables  to test if $G_{\overrightarrow{T}}$ is isomorphic to $H$.
These variables play different roles, as described below.
(i) For $c\in V[H]$ and $w\in V[G]$,
 we have $\ee\big[X_c^i(w)\big]\neq 0~(1\leq i\leq \deg_H(c))$ if and only if $i=\deg_H(c)$.
 Random variables $X_c(w)$ guarantee that $G_{\overrightarrow{T}}$ contributes to $\ee[Z_H(G)]$  only if $H$ is surjectively homomorphic to $G_{\overrightarrow{T}}$, i.e., $H$ is homomorphic to $G_{\overrightarrow{T}}$ and $|V_{\overrightarrow{T}}|\leq |V[H]|$.
 (ii) Through function $Y: V[G]\rightarrow S$, every vertex $u\in V_{\overrightarrow{T}}$ maps to a random element $Y(u)$ in $S$. If $|V_{\overrightarrow{T}}|=|S|=t$, then with constant probability, vertices in $V_{\overrightarrow{T}}$ map to different $t$ numbers in $S$.
Otherwise, $|V_{\overrightarrow{T}}|<t$ and vertices in $V_{\overrightarrow{T}}$ cannot map to different $t$ elements.  Since $Q$
 is a random $\tau$th root of unity, $\ee\big[Q^i\big]\neq 0\ (1\leq i\leq \tau)$ if and only if $i=\tau$, where $\tau=\sum_{\ell\in S}\ell$. The combination of $Q$ and $Y$ guarantees that $G_{\overrightarrow{T}}$ contributes to $\ee[Z_H(G)]$ only if graph $H$ and $G_{\overrightarrow{T}}$ have the same number of vertices.
Combining (i) and (ii), only subgraphs isomorphic to $H$ contribute to $\ee[Z_H(G)]$.

\subsection{Analysis of the First Moment}

Now we show that $Z_H(G)$ defined by (\ref{define_Y}) is an unbiased estimator. We first list some lemmas that we use in proving the main theorem.

\begin{lem}[\cite{conf/approx/Ganguly04}]\label{lem:primitive}
Let $X_c$ be a randomly chosen $\deg_H(c)$th root of  unity, where $c\in V[H]$. Then, for any $1< i\leq \deg_H(c)$,
it holds that $\ee\left[X_c^i\right]=1$ if $i=\deg_H(c)$, and $\ee\left[X_c^i\right]=0$ otherwise.
\end{lem}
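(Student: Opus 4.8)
The plan is to write $X_c$ explicitly and reduce the statement to the elementary geometric-sum identity for roots of unity. Set $d:=\deg_H(c)$ and $\omega:=\ce^{2\pi\mathrm{i}/d}$, a primitive $d$th root of unity. By definition $X_c$ is uniform over $\{\omega^0,\omega^1,\ldots,\omega^{d-1}\}$, so I can write $X_c=\omega^{J}$ with $J$ uniform on $\{0,1,\ldots,d-1\}$. Then for any integer $i$,
$$
\ee\left[X_c^i\right]=\frac{1}{d}\sum_{j=0}^{d-1}\omega^{ij}=\frac{1}{d}\sum_{j=0}^{d-1}\left(\omega^{i}\right)^{j},
$$
which is a finite geometric series with ratio $\omega^{i}$.

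Next I would split into two cases according to whether $\omega^{i}=1$. If $i=d$, then $\omega^{i}=\omega^{d}=1$, every summand equals $1$, the sum is $d$, and hence $\ee\left[X_c^i\right]=1$. If instead $1<i<d$, then $i$ is not a multiple of $d$, so $\omega^{i}\neq 1$ and the geometric-series formula yields
$$
\sum_{j=0}^{d-1}\left(\omega^{i}\right)^{j}=\frac{\left(\omega^{i}\right)^{d}-1}{\omega^{i}-1}=\frac{\left(\omega^{d}\right)^{i}-1}{\omega^{i}-1}=\frac{1-1}{\omega^{i}-1}=0,
$$
so $\ee\left[X_c^i\right]=0$. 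These two cases exhaust the range $1<i\le d$ and establish the claim.

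There is essentially no obstacle here beyond bookkeeping: the only point requiring care is to confirm that for $1<i<d$ the exponent $i$ lies strictly between $0$ and $d$ and is therefore not divisible by $d$, which is exactly what guarantees $\omega^{i}\neq 1$ and hence a nonvanishing denominator in the geometric-series formula. The lower restriction $i>1$ in the statement plays no role in the computation itself; it merely records the range in which the lemma is subsequently applied.
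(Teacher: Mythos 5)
Your proof is correct. Note, however, that the paper itself never proves this lemma: it is imported verbatim by citation from Ganguly's work, so there is no internal argument to compare against. The geometric-series computation you give is the standard (and essentially the only) proof of this fact: writing $X_c=\omega^J$ with $J$ uniform on $\{0,\ldots,d-1\}$ and $\omega=\ce^{2\pi\mathrm{i}/d}$, the expectation becomes $\frac{1}{d}\sum_{j=0}^{d-1}(\omega^i)^j$, which is $1$ when $d\mid i$ and $0$ otherwise, and your case split $i=d$ versus $1<i<d$ exhausts the stated range. Your closing remark is also accurate on both counts: the hypothesis $i>1$ is not needed for the computation (for $1\le i<d$ the expectation is $0$ regardless), and the only point of substance is that $0<i<d$ forces $d\nmid i$, hence $\omega^i\neq 1$ and the geometric-sum formula applies. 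One small caveat worth making explicit: the lemma as stated in the paper says ``randomly chosen'' without specifying the distribution, and your proof (like the cited source) requires the uniform distribution over the $d$th roots of unity; this is indeed how the variables $X_c(w)$ are used elsewhere in the paper, so the assumption is the intended one.
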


\begin{lem}[\cite{conf/icalp/KaneMSS12}]\label{lem:rootpower}
Let $R$ be a primitive $\tau$th root of unity and  $k\in\mathbb{N}$.
If $\tau\mid k$, then $\sum_{\ell=0}^{\tau-1} (R^{k})^{\ell} =\tau$, otherwise $\sum_{\ell=0}^{\tau-1} (R^{k})^{\ell}=0$.
 \end{lem}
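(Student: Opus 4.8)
The plan is to split into the two cases dictated by whether $\tau \mid k$, and in each case evaluate the finite geometric sum $\sum_{\ell=0}^{\tau-1}(R^{k})^{\ell}$ directly. The key fact I would record first is that, because $R$ is a \emph{primitive} $\tau$th root of unity, the power $R^{k}$ equals $1$ if and only if $\tau \mid k$; this is immediate from the defining property that $R^{j}=1$ precisely when $\tau \mid j$, applied with $j=k$. With this in hand the two cases become mechanical.

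For the first case, suppose $\tau \mid k$, say $k=q\tau$ for some integer $q$. Then $R^{k}=(R^{\tau})^{q}=1^{q}=1$, so every summand satisfies $(R^{k})^{\ell}=1$, and the sum collapses to $\sum_{\ell=0}^{\tau-1}1=\tau$, as claimed.

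For the second case, suppose $\tau \nmid k$. Then by the observation above $R^{k}\neq 1$, so the sum is a genuine geometric series with ratio $R^{k}\neq 1$, and I would apply the closed form
$$
\sum_{\ell=0}^{\tau-1}(R^{k})^{\ell}=\frac{(R^{k})^{\tau}-1}{R^{k}-1}.
$$
The numerator is $(R^{k})^{\tau}-1=(R^{\tau})^{k}-1=1^{k}-1=0$, using $R^{\tau}=1$, while the denominator is nonzero since $R^{k}\neq 1$; hence the whole expression vanishes.

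There is essentially no technical obstacle here. The only point requiring care is the case split itself and the justification that $R^{k}\neq 1$ when $\tau \nmid k$, which is exactly what makes division by $R^{k}-1$ legitimate; both follow directly from primitivity. Conceptually this is the standard orthogonality computation for the cyclic group of order $\tau$, specialized to the character $\ell\mapsto (R^{k})^{\ell}$, so no new ideas beyond the geometric-sum formula are needed.
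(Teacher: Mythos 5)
Your proof is correct and complete: the case split on $\tau\mid k$, the observation that primitivity gives $R^{k}\neq 1$ exactly when $\tau\nmid k$, and the geometric-sum evaluation $\sum_{\ell=0}^{\tau-1}(R^{k})^{\ell}=\frac{(R^{k})^{\tau}-1}{R^{k}-1}=0$ are all sound. Note that the paper itself offers no proof of this lemma --- it is imported by citation from Kane et al.\ \cite{conf/icalp/KaneMSS12} --- so there is no in-paper argument to compare against; your argument is the standard one and would serve as a self-contained justification.
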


\begin{lem}[\cite{conf/icalp/KaneMSS12}]\label{lem:numbertheorylemma}
Let $x_i\in\mathbf{Z}_{\geq 0}$ and $\sum_{i=0}^{t-1} x_i \leq t$. Then $2^t-1 \mid \sum_{i=0}^{t-1} 2^i\cdot x_i $ if and only if $x_0=\cdots=x_{t-1}=1$.
\end{lem}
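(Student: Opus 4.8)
The plan is to prove the two implications separately. The forward direction is immediate: if $x_0=\cdots=x_{t-1}=1$, then $\sum_{i=0}^{t-1} 2^i x_i = \sum_{i=0}^{t-1} 2^i = 2^t-1$, which is divisible by $2^t-1$. For the converse, write $R := \sum_{i=0}^{t-1} 2^i x_i$ and assume $2^t-1 \mid R$ with $R>0$ (the degenerate case $R=0$ forces all $x_i=0$ and corresponds to the empty vertex set, which is irrelevant in the application of the lemma). Since the hypothesis gives $\sum_{i=0}^{t-1} x_i \le t$ and there are exactly $t$ indices, it suffices to show that $2^t-1\mid R$ with $R>0$ forces the \emph{weight} $\sum_{i=0}^{t-1} x_i \ge t$, with equality attained only by the all-ones vector. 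Combined with $\sum_i x_i \le t$ this pins down $\sum_i x_i = t$ in the equality case, hence $x_0=\cdots=x_{t-1}=1$.

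To obtain this weight bound I would run a carry/reduction procedure on the coefficient vector $(x_0,\dots,x_{t-1})$ that preserves divisibility by $2^t-1$ while strictly decreasing $\sum_i x_i$. There are two moves. An \emph{ordinary carry} picks an index $i<t-1$ with $x_i\ge 2$ and replaces $(x_i,x_{i+1})$ by $(x_i-2,\,x_{i+1}+1)$, using $2\cdot 2^i = 2^{i+1}$; this leaves $R$ unchanged. An \emph{overflow carry} applies when $x_{t-1}\ge 2$, replacing $(x_{t-1},x_0)$ by $(x_{t-1}-2,\,x_0+1)$; since $2\cdot 2^{t-1}=2^t\equiv 1 \pmod{2^t-1}$, this decreases $R$ by exactly $2^t-1$ and thus preserves both nonnegativity and divisibility by $2^t-1$. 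Each move turns two ``tokens'' into one, so the weight drops by exactly one; the procedure therefore terminates at a $0/1$ vector. At termination the value lies in $[0,\sum_{i=0}^{t-1}2^i]=[0,2^t-1]$ and is a multiple of $2^t-1$, hence equals either $0$ or $2^t-1$.

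The crux is to rule out the terminal value $0$ whenever $R>0$, so that the procedure always finishes at the all-ones vector of weight $t$. I would give ordinary carries priority, performing an overflow carry only when every $x_i$ with $i<t-1$ already lies in $\{0,1\}$. The value can drop to $0$ only through an overflow carry applied at a state of value exactly $2^t-1$; but in such a state the digits below the top are $0/1$, so the lower part is at most $2^{t-1}-1$, and $2^t-1=2^{t-1}+(2^{t-1}-1)$ then forces $x_{t-1}=1$, meaning no overflow is available. Hence for $R>0$ the procedure must terminate at $(1,\dots,1)$. Since every move strictly decreased the weight, the original weight satisfies $\sum_i x_i \ge t$, with equality precisely when no move was performed, i.e.\ when $(x_0,\dots,x_{t-1})$ was already all ones.

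I expect this last step to be the only genuinely delicate point and hence the main obstacle: controlling the overflow carries and certifying that the reduction cannot collapse to $0$. Everything else is routine bookkeeping about the two carry rules. I note in passing an equivalent endgame, namely showing directly that no coordinate can vanish (if some $x_j=0$, the remaining $t-1$ powers cannot assemble a positive multiple of $2^t-1$ within the weight budget $t$), which combined with $\sum_i x_i\le t$ would again force all $x_i=1$; but making that statement rigorous amounts to re-proving the same weight bound, so I would favor the carry argument above.
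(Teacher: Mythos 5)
Your proof is correct, but there is nothing in the paper to compare it against: Lemma~\ref{lem:numbertheorylemma} is quoted from \cite{conf/icalp/KaneMSS12} and used as a black box, with no proof given here. On its own terms, your token-and-carry reduction works: both moves preserve nonnegativity of the coordinates and divisibility of the value by $2^t-1$, each move drops the weight $\sum_i x_i$ by exactly one, and the terminal $0/1$ vector has value in $\{0,\,2^t-1\}$, so everything reduces to excluding the terminal value $0$. Two remarks. First, the priority rule you introduce (ordinary carries before overflow carries) is unnecessary, and the endgame you call the ``only genuinely delicate point'' is in fact a one-liner: an overflow carry can only be applied when $x_{t-1}\geq 2$, i.e.\ when the current value is at least $2^t$, so it lands at a value at least $2^t-(2^t-1)=1$; since ordinary carries preserve the value, positivity of the value is invariant under \emph{any} order of moves, and the terminal value $0$ is excluded outright. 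Second, your handling of the degenerate case is a genuine catch: as literally stated the lemma fails for $x_0=\cdots=x_{t-1}=0$, since $2^t-1$ divides $0$, and restricting the converse to $R>0$ is exactly the right repair. This gap is harmless in the paper's application, where $x_i$ counts the vertices of $H$ whose image under $Y\circ\varphi_{\overrightarrow{T}}$ equals $2^i$, so that $\sum_{i=0}^{t-1}x_i=t\geq 1$; still, a careful reader of Theorem~\ref{main_bheo} needs your observation to apply the lemma as stated.
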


\begin{thm}\label{main_bheo}
Let $H$ be a hypergraph with $t$ vertices and $k$ edges $e^{\star}_1,\ldots,e^{\star}_k$.
 Assume that variables $X_c(w), Y(w)~(c\in V[H], w\in V[G])$ and $Q$ are defined as above.
Then,
\[\ee[Z_H(G)]=\frac{t!\cdot\auto(H)}{t^t} \cdot \#(H,G).
\]
\end{thm}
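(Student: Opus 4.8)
The plan is to compute $\ee[Z_H(G)]$ by linearity of expectation directly from the expansion \eqref{eq:expresszh}. Multiplying out the product of $k$ sums, $Z_H(G)$ becomes a sum over all $k$-tuples $\overrightarrow{T}=(\overrightarrow{e_1},\ldots,\overrightarrow{e_k})$ of oriented edges of $G$ with $\size(e_i)=\size(e^{\star}_i)$, of $\prod_{i=1}^{k} M_{\overrightarrow{e^{\star}_i}}(\overrightarrow{e_i})$. Writing $\overrightarrow{e^{\star}_i}=(c_{i,1},\ldots,c_{i,\ell_i})$ and $\overrightarrow{e_i}=(u_{i,1},\ldots,u_{i,\ell_i})$, the summand is the product over all incidence slots $(i,j)$ of $X_{c_{i,j}}(u_{i,j})\cdot Q^{Y(u_{i,j})/\deg_H(c_{i,j})}$. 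Since $Q$ and the functions $Y$ are chosen independently of the $X_c$'s, each summand factorizes as $\ee\big[\prod_{i,j} X_{c_{i,j}}(u_{i,j})\big]\cdot\ee\big[Q^{\sum_{i,j} Y(u_{i,j})/\deg_H(c_{i,j})}\big]$, and I would analyze the two factors separately.

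For the $X$-factor, I would group the incidence slots by the vertex $c\in V[H]$ they carry; vertex $c$ occupies exactly $\deg_H(c)$ slots. Using that the hash functions $X_c$ are independent across distinct $c$, the expectation factorizes over $V[H]$. For a fixed $c$, if its $\deg_H(c)$ slots do not all send to a common vertex of $G$, then collecting equal arguments and invoking the wise-independence of $X_c$ splits the expectation into factors $\ee[X_c^{m}]$ with $1\le m<\deg_H(c)$, each vanishing by \lemref{primitive}; if all slots of $c$ map to one vertex, the factor equals $\ee[X_c^{\deg_H(c)}]=1$. Hence a tuple $\overrightarrow{T}$ survives the $X$-expectation exactly when there is a \emph{well-defined} map $\varphi:V[H]\to V[G]$ with $\overrightarrow{e_i}=\varphi(\overrightarrow{e^{\star}_i})$ for all $i$; these maps are precisely the homomorphisms from $H$ to $G$, each contributing an $X$-factor of $1$. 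This gives a bijection between surviving tuples and homomorphisms $\varphi$.

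For the $(Q,Y)$-factor attached to a surviving $\varphi$, I would first simplify the exponent of $Q$: collecting the $\deg_H(c)$ slots of each $c$ (all mapping to $\varphi(c)$) turns $\sum_{i,j} Y(u_{i,j})/\deg_H(c_{i,j})$ into $\sum_{c\in V[H]} Y(\varphi(c))$. Conditioning on $Y$ and taking the expectation over the random $\tau$th root of unity $Q$ (via \lemref{rootpower}) yields $\ee[Q^{\sum_c Y(\varphi(c))}]=\Pr_Y\!\big[\tau\mid \sum_c Y(\varphi(c))\big]$. Since each $Y(\varphi(c))\in S=\{2^0,\ldots,2^{t-1}\}$ and there are exactly $t=|V[H]|$ summands, \lemref{numbertheorylemma} shows that $\tau=2^t-1$ divides the sum iff the values $\{Y(\varphi(c))\}_{c}$ are a permutation of $S$. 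This in particular forces $\varphi$ to be injective, so only injective homomorphisms survive; for such a $\varphi$ the surviving event is that $Y$ restricted to the $t$ distinct vertices $\varphi(V[H])$ is a bijection onto $S$, which, by the independence of the hash family $Y$ on these $t$ points, has probability $t!/t^t$.

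Finally I would count. The surviving terms are exactly the injective homomorphisms $\varphi:V[H]\hookrightarrow V[G]$, each contributing $1\cdot(t!/t^t)$. An injective homomorphism is an isomorphism of $H$ onto its image subgraph, which is an occurrence of $H$ in $G$; conversely each occurrence receives exactly $\auto(H)$ such isomorphisms, so the number of injective homomorphisms is $\auto(H)\cdot\#(H,G)$. Summing then gives $\ee[Z_H(G)]=\frac{t!}{t^t}\cdot\auto(H)\cdot\#(H,G)$, as claimed. The main obstacle, and the step deserving the most care, is verifying that the $X$-variables and the $(Q,Y)$-variables together isolate precisely the injective homomorphisms: the $X_c$'s must kill every ``partial'' assignment in which a vertex of $H$ is sent to several vertices of $G$, while $Q$ combined with $Y$ and the number-theoretic \lemref{numbertheorylemma} must force all $t$ images to be distinct. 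Keeping the incidence-slot bookkeeping exact, and checking that the limited independence of the hash families suffices on the relevant (at most $t$) vertices, is where the argument is most delicate.
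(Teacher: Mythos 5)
Your proof is correct and takes essentially the same approach as the paper's own proof: the same expansion of $Z_H(G)$ over oriented edge tuples, the same factorization (by independence) into an $X$-factor handled via \lemref{primitive} and a $(Q,Y)$-factor handled via \lemrefs{rootpower}{numbertheorylemma}, and the same final count identifying the surviving terms with injective homomorphisms, each weighted by $t!/t^t$ and grouped into occurrences via $\auto(H)$. The only cosmetic difference is that you condition on $Y$ and read the $Q$-expectation as the probability that $\tau$ divides $\sum_{c}Y(\varphi(c))$, whereas the paper evaluates the same quantity as a joint sum over the root of unity and the $Y$-values.
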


\begin{proof}
Let $q_i$ be the size of  edge $e^{\star}_i$ in $H$.
 Consider the expansion of $Z_H(G)$:
\begin{align*}
Z_H(G)
 &= \prod_{\overrightarrow{e_i^{\star}}\in E[H]}\left[\sum_{
 \substack{e\in E[G]\\\mathrm{size}(e)=\mathrm{size}({e^{\star}})\\e=\{u_1,\ldots,u_{\ell}\}}}\
 \sum_{(\sigma(1),\ldots,\sigma(\ell))\in\mathbb{S}_{\ell}}
 M_{\overrightarrow{e_i^{\star}}}\left(u_{\sigma(1)},\ldots, u_{\sigma(\ell)} \right)\right]\\
 &= \sum_{\substack{e_1,\ldots e_k\in E[G]\\
 \forall i: \mathrm{size}(e_i)=\mathrm{size}(e^{\star}_i)\\
 e_i=(u_{i,1},\ldots,u_{i,q_i})}} \sum_{\substack{ \sigma_1,\ldots, \sigma_k \\
 \forall i: \sigma_i\in\mathbb{S}_{q_i}}} \prod_{1\leq i\leq k}
 M_{\overrightarrow{e^{\star}_i}}\left(u_{i,\sigma_i(1)},
 \ldots, u_{i,\sigma_i(q_i)}\right).
\end{align*}
Hence the term corresponding to edges $e_1,\ldots, e_k$ with $\size(e_i)=\size(e^{\star}_i)$ and an arbitrary orientation $\sigma_1,\ldots,\sigma_k$ of edges $e_1,\ldots, e_k$ is
\begin{equation}\label{term_expression}
\prod_{1\leq i\leq k}
 M_{\overrightarrow{e^{\star}_i}}\left(u_{i,\sigma_i(1)},
 \ldots, u_{i,\sigma_i(\size({e^{\star}_i}) )}\right)
 = \prod_{ 1\leq i\leq k}
 \prod_{1\leq j\leq \size({e^{\star}_i})}
  X_{c^i_j}\left(w^i_j\right) Q^{\frac{Y\left(w^i_j \right)}{\deg_H\left(c^i_j \right)}}
 \enspace,
\end{equation}
where $c^i_j$ is the $j$th vertex of edge $\overrightarrow{e^{\star}_i}$, and $w^i_j$ is the $j$th vertex of edge $\overrightarrow{e_i}$.

Consider $\overrightarrow{T}=(\overrightarrow{e_1},\ldots, \overrightarrow{e_k})$ with $\size(e_i)=\size(e^{\star}_i)$, where  $\overrightarrow{e_i}$ is determined by $e_i$ and an arbitrary
orientation.
We show that the expectation of (\ref{term_expression}) is non-zero if and only if the graph induced by $\overrightarrow{T}$ is an occurrence of  $H$ in $G$.
Moreover, if the expectation of (\ref{term_expression}) is non-zero, then its value is a constant.

For a vertex $c$ of $H$ and a vertex $w$ of $G$, let
\[
 \gamma_{\overrightarrow{T}}(c,w):=\left |\big\{(i, j) ~:~ c^i_j = c \text{ and } w^i_j = w\big\} \right |
\]
be the number of pairs $(i,j)$ where the $j$th vertex of $\overrightarrow{e^{\star}_i}$ in $H$ is $c$, and the $j$th vertex of $\overrightarrow{e_i}$ in $\overrightarrow{T}$ is $w$.
Since every vertex $c$ of $H$ is incident to $\deg_H(c)$ edges, for any $c\in V[H]$, it holds that  $\sum_{w\in
V_{\overrightarrow{T}}}\gamma_{\overrightarrow{T}}(c,w)=\deg_H(c)$.
By the definition of $\gamma_{\overrightarrow{T}}$, we rewrite (\ref{term_expression}) as
\[
\left(\prod_{c\in V[H]}\prod_{w\in V_{\overrightarrow{T}}}
X_c^{\gamma_{\overrightarrow{T}}(c,w)}(w)\right)\cdot
\left(\prod_{c\in V[H]}\prod_{w\in V_{\overrightarrow{T}}}
Q^{\frac{\gamma_{\overrightarrow{T}}(c,w)Y(w)}{\deg_H(c)}}\right).
\]
 Therefore we can rewrite $Z_H(G)$ as
 \[
\sum_{\substack{e_1,\ldots e_k\in E[G]\\
 \forall i: \mathrm{size}(e_i)=\mathrm{size}(e^{\star}_i)\\
 e_i=(u_{i,1},\ldots,u_{i,q_i})}} \sum_{\substack{ \sigma_1,\ldots, \sigma_k \\
 \forall i: \sigma_i\in\mathbb{S}_{q_i} \\ \overrightarrow{T}=(\overrightarrow{e_1},\ldots,\overrightarrow{e_k})}}
 \left(\prod_{c\in V[H]}
 \prod_{w\in V_{\overrightarrow{T}}}
 X^{\gamma_{\overrightarrow{T}}(c,w)}_c(w)\right)\cdot
\left(\prod_{c\in V[H]}
 \prod_{w\in V_{\overrightarrow{T}}}
Q^{\frac{\gamma_{\overrightarrow{T}}(c,w)Y(w)}{\deg_H(c)}}\right),
\]
where the first summation is over all  $k$-tuples of  edges in $E[G]$ with $\size(e_i)=\size(e^{\star}_i)$, and
the second summation is over all possible permutations of vertices of edges $e_1,\ldots, e_k$.
By linearity of expectations of these random variables and
the assumption that $X_c(w)~(c\in V[H], w\in V[G]), Y(w)~(w\in V[G])$ and $Q$
 have sufficient independence, we have
\begin{align*}
\lefteqn{\ee[Z_H(G)]}\\
&= \sum_{\substack{e_1,\ldots e_k\in E[G]\\
 \forall i: \mathrm{size}(e_i)=\mathrm{size}(e^{\star}_i)\\
 e_i=(u_{i,1},\ldots,u_{i,q_i})}} \sum_{\substack{ \sigma_1,\ldots, \sigma_k \\
 \forall i: \sigma_i\in\mathbb{S}_{q_i} \\ \overrightarrow{T}=(\overrightarrow{e_1},\ldots,\overrightarrow{e_k})}} \left(\prod_{c \in V[H]}\ee\left[\prod_{w\in V_{\overrightarrow{T}}}X_c^{\gamma_{\overrightarrow{T}}(c,w)}(w)\right]\right)\cdot
\ee\left[\prod_{\substack{c\in V[H] \\ w\in V_{\overrightarrow{T}}}}
Q^{\frac{\gamma_{\overrightarrow{T}}(c,w)Y(w)}{\deg_H(c)}}\right].
\end{align*}
For any $\overrightarrow{T}$, let
\begin{equation}
\alpha_{\overrightarrow{T}}:=\underbrace{\left(\prod_{c \in V[H]}\ee\left[\prod_{w\in V_{\overrightarrow{T}}}X_c^{\gamma_{\overrightarrow{T}}(c,w)}(w)\right]\right)}_{A}\cdot
\underbrace{\ee\left[\prod_{c\in V[H]}\prod_{w\in V_{\overrightarrow{T}}}
Q^{\frac{\gamma_{\overrightarrow{T}}(c,w)Y(w)}{\deg_H(c)}}\right]}_B. \label{defalpha}
\end{equation}
We will next show that $\alpha_{\overrightarrow{T}}$ is either zero or a nonzero constant independent of
$\overrightarrow{T}$. The latter is the case only if
$G_T$, the undirected hypergraph induced from edge set $\overrightarrow{T}$, is isomorphic to hypergraph $H$.

First, we consider the product $A$. Assume $A\neq 0$.
Using the same technique as \cite{conf/icalp/KaneMSS12,conf/esa/ManjunathMPS11}, we construct a homomorphism from $H$ to $G_{\overrightarrow{T}}$ under the condition $A\neq 0$.
Remember that: (i) for any $c\in V[H]$ and $w\in V_{\overrightarrow{T}}$, $\gamma_{\overrightarrow{T}}(c,w)\leq\deg_H(c)$, and (ii) for any
 $c\in V[H]$, $w\in V_{\overrightarrow{T}}$ and
 $0\leq i\leq \deg_{H}(c)$,
$\ee\left[X_c^i(w)\right]\neq 0$ if and only if $i=\deg_H(c)$ or $i=0$. Therefore,
for any fixed $\overrightarrow{T}$ and $c \in V[H]$,  $\ee\left[\prod_{w\in
V_{\overrightarrow{T}}}X_c^{\gamma_{\overrightarrow{T}}(c,w)}(w)\right]\neq 0$ if and only if
$\gamma_{\overrightarrow{T}}(c,w) \in \{ 0, \deg_H(c) \}$ for all $w$.
Now, assume that $\ee\left[\prod_{w\in
V_{\overrightarrow{T}}}X_c^{\gamma_{\overrightarrow{T}}(c,w)}(w)\right]\neq 0$ for every $c\in V[H]$. Then, $\gamma_{\overrightarrow{T}}(c,w) \in \{ 0, \deg_H(c) \}$ for all $c\in V[H]$, and  $w\in V[G]$. Since $\sum_w \gamma_{\overrightarrow{T}}(c,w)
= \deg_{H}(c)$ for any $c\in V[H]$, there exists for each $c\in V[H]$  a unique vertex $w\in V_{\overrightarrow{T}}$ such that
$\gamma_{\overrightarrow{T}}(c,w)=\deg_H(c)$.
Define $\varphi_{\overrightarrow{T}}: V[H] \rightarrow V_{\overrightarrow{T}}$ as
$\varphi_{\overrightarrow{T}}(c) = w$ for the vertex $w$ satisfying $\gamma_{\overrightarrow{T}}(c,w)=\deg_H(c)$. Then, $\varphi_{\overrightarrow{T}}$ is a homomorphism, i.e.,
a set $\{u_1,\ldots,u_{\ell}\}\in E[H]$ implies $\{\varphi(u_1),\ldots,\varphi(u_{\ell})\}\in E[G_{\overrightarrow{T}}]$.
Hence, $A\neq 0$ implies  $H$ is homomorphic to $G_{\overrightarrow{T}}$, and by Lemma~\ref{lem:primitive} we have
\begin{equation}\label{eq:constant1}
\prod_{c \in V[H]}\ee\left[\prod_{w\in V_{\overrightarrow{T}}}X_c^
{\gamma_{\overrightarrow{T}}(c,w)}(w)\right]=
\prod_{c \in V[H]}\ee\left[X_c^
{\deg_H(c)}(\varphi_{\overrightarrow{T}}(c))\right]=1\enspace.
\end{equation}

Second, we consider the product $B$. We will show that, under the condition $A\neq 0$,  $G_{T}$ is an occurrence of $H$ if and only if $B\neq 0$.
Observe that
\[
\ee\left[\prod_{c\in V[H]}\prod_{w\in V_{\overrightarrow{T}}}
Q^{\frac{\gamma_{\overrightarrow{T}}(c,w)Y(w)}{\deg_H(c)}}\right]=
\ee\left[
Q^{\sum_{c\in V[H]}\sum_{w\in V_{\overrightarrow{T}}}\frac{\gamma_{\overrightarrow{T}}(c,w)Y(w)}{\deg_H(c)}}\right]\enspace.
\]

\emph{Case~1:} Assume that $G_{T}$ is an occurrence of $H$ in $G$. Then, $|V_{\overrightarrow{T}}|=|V[H]|$,  and the homomorphism $\varphi_{\overrightarrow{T}}$ constructed above is a bijection and an isomorphism. This implies that
\begin{align*}
\sum_{c\in V[H]}\sum_{w\in V_{\overrightarrow{T}}}\frac{\gamma_{\overrightarrow{T}}(c,w)\cdot Y(w)}{\deg_H(c)}
=\sum_{c\in V[H]}Y(\varphi_{\overrightarrow{T}}(c))
=\sum_{w\in V_{\overrightarrow{T}}}Y(w).
\end{align*}
Without
loss of generality, let $V_{\overrightarrow{T}}=\{w_1,\ldots,w_t\}$. By considering all possible choices of $Y(w_1),\ldots,Y(w_t)$, denoted by $y(w_1),\ldots,y(w_t)\in S$, and independence between $Q$ and $Y(w)~(w\in V[G])$, we have
 \begin{equation}\nonumber
\begin{split}
B=&\sum_{j=0}^{\tau-1}\sum_{{y(w_1),\dots,y(w_t)\in S}} \frac{1}{\tau}\
\left(\prod_{i=1}^t\Pro{Y(w_i)=y(w_i)}\right)\cdot
\mathrm{exp}\left({\frac{2\pi\mathrm{i}j}{\tau}\sum_{\ell=1}^ty(w_{\ell})}\right)\\
=&\sum_{j=0}^{\tau-1}\sum_{\substack{y(w_1),\dots,y(w_t)\in S\\ \vartheta:=y(w_1)+\dots+y(w_t), \tau \mid \vartheta}} \frac{1}{\tau}\
\left(\frac{1}{t}\right)^t\mathrm{exp}\left({\frac{2\pi\mathrm{i}}{\tau}\cdot \vartheta\cdot j}\right)\\
&\qquad+\sum_{j=0}^{\tau-1}\sum_{\substack{y(w_1),\dots,y(w_t)\in S\\  \vartheta:=y(w_1)+\dots+y(w_t), \tau \nmid \vartheta}} \frac{1}{\tau}\
\left(\frac{1}{t}\right)^t\mathrm{exp}\left({\frac{2\pi\mathrm{i}}{\tau}\cdot \vartheta\cdot j}\right)\enspace.
\end{split}
\end{equation}
Applying \lemref{rootpower} with $R=\exp\left(\frac{2\pi\mathrm{i}}{\tau}\right)$,  the second summation is zero.  Hence, by \lemref{numbertheorylemma}, we have
\begin{equation}
B=\sum_{\substack{y(w_1),\dots,y(w_t)\in S\\ \tau \mid y(w_1)+\dots+y(w_t)}}
\left(\frac{1}{t}\right)^t =
\sum_{\substack{y(w_1),\dots,y(w_t)\in S\\y(w_1)+\dots+y(w_t)=\tau}}
\left(\frac{1}{t}\right)^t =\left(\frac{1}{t}\right)^t\cdot t!= \frac{t!}{t^t}\enspace.
\label{eq:factor}
\end{equation}

\emph{Case~2:} Assume that $G_{T}$ is not an occurrence of $H$ in $G$. Then, $\varphi_{\overrightarrow{T}}$ is not a bijection, and trivially is not an isomorphism.
 Let $V_{\overrightarrow{T}}=\{w_1,\ldots,w_{t'}\}$, where $t'<t$. Then,
 there is a vertex $w\in V_{\overrightarrow{T}}$ and different $b,c\in V[H]$, such that $\varphi_{\overrightarrow{T}}(b)=\varphi_{\overrightarrow{T}}(c)=w$. As before, we have
\[
\sum_{c\in V[H]}\sum_{w\in V_{\overrightarrow{T}}}\frac{\gamma_{\overrightarrow{T}}(c,w)\cdot Y(w)}{\deg_H(c)}
=\sum_{c\in V[H]}Y(\varphi_{\overrightarrow{T}}(c))\enspace.
\]
By \lemref{numbertheorylemma},  $\tau\nmid \sum_{c\in V[H]}Y(\varphi(c))$ regardless of the choices of $Y(w_1),\ldots, Y(w_{t'})$. Hence,
 \begin{equation}\nonumber
\begin{split}
B=&\sum_{j=0}^{\tau-1}\sum_{\substack{y(w_1),\dots,y(w_{t'})\in S\\ \vartheta:=\sum_{c\in V[H]}y(\varphi_{\overrightarrow{T}}(c)),\tau \nmid \vartheta } } \frac{1}{\tau}\
\left(\frac{1}{t}\right)^{t'}\mathrm{exp}\left({\frac{2\pi\mathrm{i}}{\tau}\cdot \vartheta\cdot j}\right)=0\enspace,
\end{split}
\end{equation}
where the last equality follows from
 \lemref{rootpower} with $R=\exp\left(\frac{2\pi\mathrm{i}}{\tau}\right)$.

By \eq{constant1} and \eq{factor}, we have $\alpha_{\overrightarrow{T}} = t! \big/t^t$ if $\varphi_{\overrightarrow{T}}$ is an isomorphism, and $\alpha_{\overrightarrow{T}}=0$ otherwise.
Note that for every occurrence of $H$ in $G$, denoted by $H'$, there are $\mathrm{auto}(H)$ $\mbox{isomorphic}$ mappings between $H'$ and $H$, and each such mapping $\varphi_{\overrightarrow{T}}$ corresponds to one $T$ together with an appropriate orientation of  every edge.
Hence, every $H'$ is counted $\mathrm{auto}(H)$ times and
\[
\ee[Z_H(G)]=
\sum_{\substack{e_1,\ldots e_k\in E[G]\\
 \forall i: \mathrm{size}(e_i)=\mathrm{size}(e^{\star}_i)\\
 e_i=(u_{i,1},\ldots,u_{i,q_i})}} \sum_{\substack{ \sigma_1,\ldots, \sigma_k \\
 \forall i: \sigma_i\in\mathbb{S}_{q_i} \\ \overrightarrow{T}=(\overrightarrow{e_1},\ldots,\overrightarrow{e_k})}}
 \alpha_{\overrightarrow{T}}= \frac{t!\cdot \auto(H)}{t^t}\cdot \#(H,G)\enspace.\qedhere
\]
\end{proof}

\begin{proof}[Proof of \thmref{graphpoly}] By Theorem~\ref{main_bheo}, we have
\begin{equation}\label{graphpolynomial}
\#(H,G)=\frac{t^t}{t!\cdot\mathrm{auto}(H)}\cdot\ee[Z_H(G)].
\end{equation}
Expanding the right-hand side of (\ref{graphpolynomial}) by the definition of the expectation, the theorem holds.
\end{proof}

\subsection{Analysis of the Second Moment}

Now we  analyze the variance of $Z_H(G)$ and use Chebyshev's inequality to upper bound the space requirement of our algorithm in order to get a $(1\pm\varepsilon)$-approximation of $\#(H,G)$.
Our analysis relies on the following lemma about the number of subgraphs in a hypergraph.

\begin{lem}\label{upperboundhygraph}
Let $G$ be a hypergraph with $m$ edges, and $H$ be a hypergraph with $k$ edges and minimum degree 2. Then $\#(H,G)=O(m^{k/2})$.
\end{lem}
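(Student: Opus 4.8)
The plan is to reduce counting occurrences of $H$ to counting homomorphisms, and then bound the number of homomorphisms by an entropy (Shearer) argument that exploits the minimum-degree-$2$ hypothesis. First I would set up the reduction. Every occurrence of $H$ in $G$ is the image of an injective, edge-preserving map $\varphi\colon V[H]\to V[G]$, meaning $\{\varphi(c):c\in e^{\star}\}\in E[G]$ for every $e^{\star}\in E[H]$; conversely each occurrence arises from exactly $\auto(H)$ such maps. Hence $\#(H,G)\le \mathrm{hom}(H,G)$, where $\mathrm{hom}(H,G)$ denotes the number of all (not necessarily injective) edge-preserving maps $V[H]\to V[G]$. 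Since $H$ has constant size, $\auto(H)=O(1)$, so it suffices to prove $\mathrm{hom}(H,G)=O(m^{k/2})$.

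The heart of the argument is the observation that the minimum-degree-$2$ condition makes the family $E[H]$, viewed as subsets of $V[H]$, cover each vertex $c$ at least $\deg_H(c)\ge 2$ times. Equivalently, the uniform weight $x_{e^{\star}}=1/2$ on the edges of $H$ is a fractional edge cover. To turn this into a count I would draw $\varphi$ uniformly at random from the set of all edge-preserving maps, so that $\operatorname{Ent}(\varphi)=\log \mathrm{hom}(H,G)$, where $\operatorname{Ent}$ is the Shannon entropy. For each edge $e^{\star}\in E[H]$, let $\varphi_{e^{\star}}$ be the restriction of $\varphi$ to the vertices of $e^{\star}$. Because $\varphi$ is edge-preserving, $\varphi_{e^{\star}}$ always specifies an ordered edge of $G$ of size $\size(e^{\star})$, and there are at most $\size(e^{\star})!\cdot m=O(m)$ of these; consequently $\operatorname{Ent}(\varphi_{e^{\star}})\le \log\bigl(O(m)\bigr)$.

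Shearer's entropy lemma, applied with the double cover of $V[H]$ by $E[H]$, then yields $\operatorname{Ent}(\varphi)\le \tfrac12\sum_{e^{\star}\in E[H]}\operatorname{Ent}(\varphi_{e^{\star}})$. Combining this with the per-edge bound gives $\log \mathrm{hom}(H,G)\le \tfrac{k}{2}\log\bigl(O(m)\bigr)$, i.e.\ $\mathrm{hom}(H,G)=O(m^{k/2})$, and the lemma follows from the reduction above. I note that this is exactly the case of the AGM (Atserias--Grohe--Marx) / Friedgut--Kahn inequality corresponding to the all-$\tfrac12$ fractional edge cover, so one may alternatively cite that bound, or prove the same estimate by an iterated Cauchy--Schwarz (Loomis--Whitney / Finner) argument instead of Shearer's lemma.

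The conceptual crux is the single observation that minimum degree $2$ furnishes the weight-$\tfrac12$ fractional cover; once that is in place the rest is bookkeeping. Accordingly, the main obstacle is not depth but care: I must handle edges of $H$ of \emph{varying} sizes uniformly (the non-uniform hypergraph setting), make sure the reduction correctly accounts for occurrences being \emph{not necessarily induced} and for the $\auto(H)$ multiplicity, and verify that the support of each $\varphi_{e^{\star}}$ really is bounded by $O(m)$ with a constant depending only on the fixed hypergraph $H$. All constants absorbed into the $O(\cdot)$ depend only on $H$, which is legitimate since $H$ has constant size.
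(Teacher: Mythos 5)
Your proposal is correct and takes essentially the same route as the paper: both hinge on the single observation that minimum degree $2$ makes the constant weight $1/2$ on the edges of $H$ a fractional edge cover of total weight $k/2$. The paper then simply cites Theorem~1.1 of Friedgut and Kahn \cite{journals/IJM/FK98}, whereas you inline the relevant case of that theorem's proof (Shearer's entropy lemma applied to the double cover of $V[H]$ by $E[H]$); as you note yourself, citing the Friedgut--Kahn/AGM bound directly is the alternative, and that is exactly what the paper does.
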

\begin{proof}
We define \emph{the fractional cover} $\varphi:E[H]\mapsto[0,1]$ as $\varphi(e)=1/2$ for every $e\in E[H]$. Since the minimum degree of graph $H$ is 2, we have $\sum_{e\ni v}\varphi(e) \geq 1$ for every $v\in V[
H]$. Therefore the \emph{fractional cover number}  $\min_{\varphi}\left\{\sum_{e\in E[H]} \varphi(e)\right\}\leq k/2$. By Theorem~1.1 of \cite{journals/IJM/FK98}, the lemma holds.
\end{proof}

\begin{thm}\label{thm:variance}
Let $G$ be a hypergraph with $m$ edges, and $H$ be a hypergraph with $k$ edges. Random variables $X_c(w), Y(w)~(c\in V[H], w\in V[G])$ and $Q$ are defined as above. Then the following statements hold:
(1) $\ee[Z_H(G)\cdot\overline{Z_H(G)}]=O(m^{2k})$; (2) If the minimum degree of $H$ is at least 2, then
$\ee[Z_H(G)\cdot\overline{Z_H(G)}]=O(m^k)$.
\end{thm}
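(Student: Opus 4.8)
The plan is to handle the two bounds by quite different means: (1) follows from a crude deterministic estimate, while (2) requires counting the pairs of $k$-tuples that survive in the expansion of $\ee[Z_H(G)\overline{Z_H(G)}]$. For part (1) I would first bound each factor $Z_{\overrightarrow{e^{\star}}}(G)$ pointwise. Since $Z_{\overrightarrow{e^{\star}}}(G)$ is a sum, over the at most $m$ edges $e$ of matching size and their at most $t!$ orientations, of terms $M_{\overrightarrow{e^{\star}}}(\cdots)$, each a product of roots of unity and hence of modulus exactly $1$, we get $|Z_{\overrightarrow{e^{\star}}}(G)|\le t!\cdot m=O(m)$. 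Multiplying the $k$ factors gives $|Z_H(G)|\le (t!\,m)^k=O(m^k)$, so $Z_H(G)\overline{Z_H(G)}=|Z_H(G)|^2=O(m^{2k})$ holds deterministically, and (1) follows at once.

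For part (2) I would expand $\ee[Z_H(G)\overline{Z_H(G)}]$ into a double sum over oriented $k$-tuples $\overrightarrow{T}=(\overrightarrow{e_1},\dots,\overrightarrow{e_k})$ and $\overrightarrow{T'}=(\overrightarrow{e_1'},\dots,\overrightarrow{e_k'})$, exactly as in the proof of Theorem~\ref{main_bheo}, using $\overline{X_c(w)}=X_c^{-1}(w)$ and $\overline{Q}=Q^{-1}$. Writing $a_c(w)=\gamma_{\overrightarrow{T}}(c,w)$ and $b_c(w)=\gamma_{\overrightarrow{T'}}(c,w)$, the $X$-variables are independent of the $(Q,Y)$-variables, so each term factors and has modulus at most $1$; moreover it vanishes unless its $X$-factor $\prod_{c,w}\ee[X_c^{a_c(w)-b_c(w)}(w)]$ is non-zero. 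By Lemma~\ref{lem:primitive} this forces, for every $c\in V[H]$ and every $w$, the congruence $a_c(w)\equiv b_c(w)\pmod{\deg_H(c)}$. Hence $\ee[Z_H(G)\overline{Z_H(G)}]$ is bounded by the number of pairs $(\overrightarrow{T},\overrightarrow{T'})$ obeying these congruences, and the whole task reduces to counting such pairs.

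The combinatorial heart is a per-vertex dichotomy drawn from the congruences. Fixing $c$ and writing $d=\deg_H(c)$, the constraints $0\le a_c(w),b_c(w)\le d$, $\sum_w a_c(w)=\sum_w b_c(w)=d$, and $d\mid a_c(w)-b_c(w)$ force, by comparing supports, that either \textbf{(A)} $a_c$ and $b_c$ are each concentrated on a single vertex (the two vertices possibly distinct), or \textbf{(B)} $a_c\equiv b_c$ as functions. I would then form the combined hypergraph $F$ on $V_{\overrightarrow{T}}\cup V_{\overrightarrow{T'}}\subseteq V[G]$ whose edges are the $2k$ edges of $\overrightarrow{T}$ and $\overrightarrow{T'}$, and note that the dichotomy forces $\deg_F(w)\ge 2$ at every vertex: in case~(A) a vertex receiving any incidence of $c$ receives all $\deg_H(c)\ge 2$ of them, while in case~(B) the identical distributions make both copies contribute to $w$ together, so $w$ carries $a_c(w)+b_c(w)=2a_c(w)\ge 2$ incidences. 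Thus the weight $\varphi\equiv 1/2$ on all $2k$ edges is a fractional cover of $F$ of total weight $k$, so the fractional cover number of $F$ is at most $k$. Since $H$ has constant size there are only $O(1)$ isomorphism types of $F$, and for each the number of realizations in $G$ (respecting $\size(e_i)=\size(e_i^{\star})$) is $O(m^{k})$ by the Friedgut--Kahn bound used in Lemma~\ref{upperboundhygraph}; summing over the $O(1)$ types and orientations yields $O(m^{k})$.

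The step I expect to be the main obstacle is the case~(B) analysis: a \emph{spread} vertex makes $\overrightarrow{T}$ fail to be a homomorphic copy of $H$, which a priori could create degree-$1$ vertices in $F$ and destroy the $1/2$-per-edge cover. The point that rescues the argument is that the surviving congruences force $\overrightarrow{T}$ and $\overrightarrow{T'}$ to spread such a vertex in \emph{exactly the same way} ($a_c=b_c$), so the two copies reinforce each other and restore minimum degree $2$ in $F$. Making this identification precise, and verifying that the Friedgut--Kahn estimate applies to the combined (possibly multi-edge) structure $F$, is where the care lies; note that the $Q$- and $Y$-variables play no role in the upper bound and only further thin out the surviving pairs.
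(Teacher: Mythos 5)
Your proposal is correct and follows essentially the same route as the paper: part (1) by bounding the $O(m^{2k})$ terms in the expansion of $Z_H(G)\overline{Z_H(G)}$ (your pointwise modulus bound is a trivial variant of the paper's term-by-term expectation bound), and part (2) by using the root-of-unity vanishing conditions to constrain the surviving pairs $(\overrightarrow{T},\overrightarrow{T'})$, observing that the combined $2k$-edge hypergraph has minimum degree at least $2$, and then applying the fractional-cover (Friedgut--Kahn) bound of Lemma~\ref{upperboundhygraph} with weight $1/2$ per edge to get $O(m^{k})$. Your per-vertex dichotomy (A)/(B) and your handling of repeated edges in the combined structure actually make explicit steps that the paper's proof only asserts.
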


\begin{proof} By definition we write $\ee[Z_H(G)\cdot\overline{Z_H(G)}]$ as
\begin{align*}
\lefteqn{\ee\left[Z_H(G)\cdot\overline{Z_H(G)}\right]} \\
=& \ee\left[\left(
\sum_{\substack{e_1,\ldots, e_k\in E[G]\\
 \forall i: \mathrm{size}(e_i)=\mathrm{size}(e^{\star}_i)\\
 e_i=(u_{i,1},\ldots,u_{i,q_i})}} \sum_{\substack{ \sigma_1,\ldots, \sigma_k \\
 \forall i: \sigma_i\in\mathbb{S}_{q_i} \\ \overrightarrow{T_1}=(\overrightarrow{e_1},\ldots,\overrightarrow{e_k})}}
\left(\prod_{\substack{c\in V[H] \\ w\in V_{\overrightarrow{T_1}}}}X^{\gamma_{\overrightarrow{T_1}}(c,w)}_c(w)\right)\cdot
\left(\prod_{\substack{c\in V[H] \\ w\in V_{\overrightarrow{T_1}}}}
Q^{\frac{\gamma_{\overrightarrow{T_1}}(c,w)Y(w)}{\deg_H(c)}}\right)\right)\cdot
\right.\\ &  \left. \overline{\left(\sum_{\substack{e'_1,\ldots, e'_k\in E[G]\\
 \forall i: \mathrm{size}(e'_i)=\mathrm{size}(e^{\star}_i)\\
 e'_i=(v_{i,1},\ldots,v_{i,q_i})}} \sum_{\substack{ \sigma'_1,\ldots, \sigma'_k \\
 \forall i: \sigma'_i\in\mathbb{S}_{q_i} \\ \overrightarrow{T_2}=(\overrightarrow{e'_1},\ldots,\overrightarrow{e'_k})}}
\left(\prod_{\substack{c\in V[H] \\ w\in V_{\overrightarrow{T_2}}}}X^{\gamma_{\overrightarrow{T_2}}(c,w)}_c(w)\right)\cdot
\left(\prod_{\substack{c\in V[H] \\ w\in V_{\overrightarrow{T_2}}}}
Q^{\frac{\gamma_{\overrightarrow{T_2}}(c,w)Y(w)}{\deg_H(c)}}\right)\right)} \right]\\
&=\ee\left[\sum_{\substack{e_1,\ldots, e_k\in E[G]\\
 \forall i: \mathrm{size}(e_i)=\mathrm{size}(e^{\star}_i)\\
 e_i=(u_{i,1},\ldots,u_{i,q_i})}} \sum_{\substack{ \sigma_1,\ldots, \sigma_k \\
 \forall i: \sigma_i\in\mathbb{S}_{q_i} \\ \overrightarrow{T_1}=(\overrightarrow{e_1},\ldots,\overrightarrow{e_k})}}
 \sum_{\substack{e'_1,\ldots, e'_k\in E[G]\\
 \forall i: \mathrm{size}(e'_i)=\mathrm{size}(e^{\star}_i)\\
 e'_i=(v_{i,1},\ldots,v_{i,q_i})}} \sum_{\substack{ \sigma'_1,\ldots, \sigma'_k \\
 \forall i: \sigma'_i\in\mathbb{S}_{q_i} \\ \overrightarrow{T_2}=(\overrightarrow{e'_1},\ldots,\overrightarrow{e'_k})}}\right.\\
 & \left.\left(\prod_{\substack{c\in V[H] \\ w\in V_{\overrightarrow{T_1}\cup\overrightarrow{T_2}}}}X^{\gamma_{\overrightarrow{T_1}}(c,w)-
 \gamma_{\overrightarrow{T_2}}(c,w)}_c(w)\right)\cdot\left(\prod_{\substack{c\in V[H] \\ w\in V_{\overrightarrow{T_1}\cup\overrightarrow{T_2}}}}
Q^{\frac{\left(\gamma_{\overrightarrow{T_1}}(c,w)-\gamma_{\overrightarrow{T_2}}(c,w)\right)\cdot Y(w)}{\deg_H(c)}}\right)\right]
\end{align*}
By linearity of expectations and the condition that random variables $X_c(w) (c\in V[H], w\in V[G])$ are $(2t\cdot k)$-wise independent, and $X_c( c\in V[H]), Q$ are chosen independently, we can rewrite $\ee[Z_H\cdot\overline{Z_H}]$ as
$$
\sum_{\substack{e_1,\ldots, e_k\in E[G]\\
 \forall i: \mathrm{size}(e_i)=\mathrm{size}(e^{\star}_i)\\
 e_i=(u_{i,1},\ldots,u_{i,q_i})}} \sum_{\substack{ \sigma_1,\ldots, \sigma_k \\
 \forall i: \sigma_i\in\mathbb{S}_{q_i} \\ \overrightarrow{T_1}=(\overrightarrow{e_1},\ldots,\overrightarrow{e_k})}}
 \sum_{\substack{e'_1,\ldots, e'_k\in E[G]\\
 \forall i: \mathrm{size}(e'_i)=\mathrm{size}(e^{\star}_i)\\
 e'_i=(v_{i,1},\ldots,v_{i,q_i})}} \sum_{\substack{ \sigma'_1,\ldots, \sigma'_k \\
 \forall i: \sigma'_i\in\mathbb{S}_{q_i} \\ \overrightarrow{T_2}=(\overrightarrow{e'_1},\ldots,\overrightarrow{e'_k})}}
 \alpha_{\overrightarrow{T_1},\overrightarrow{T_2}}
$$
where the value of $\alpha_{\overrightarrow{T_1},\overrightarrow{T_2}}$ is
\[
  {\prod_{c\in V[H]}\ee\left[\prod_{w\in V_{\overrightarrow{T_1}\cup\overrightarrow{T_2}}}X^{\gamma_{\overrightarrow{T_1}}(c,w)-
 \gamma_{\overrightarrow{T_2}}(c,w)}_c(w)\right]\cdot\ee\left(\prod_{\substack{c\in V[H] \\ w\in V_{\overrightarrow{T_1}\cup\overrightarrow{T_2}}}}
Q^{\frac{\left(\gamma_{\overrightarrow{T_1}}(c,w)-\gamma_{\overrightarrow{T_2}}(c,w)\right)\cdot Y(w)}{\deg_H(c)}}\right)}=O(1).
\]
Since $\ee[Z_H(G)\cdot\overline{Z_H(G)}]$ has at most $O(m^{2k})$ terms, the first statement holds.

Now for the second statement.
Remember that (i) for any $c\in V[H]$ and $w\in V_{\overrightarrow{T_1}\cup\overrightarrow{T_2}}$, $\ee[X_c^i(w)]\neq 0$ if and only if $i$ is divisible by $\deg_H(c)$, and (ii) for any $c\in V[H]$ and $w\in V_{\overrightarrow{T_1}\cup\overrightarrow{T_2}
}$, it holds that $0\leq \gamma_{\overrightarrow{T_1}}(c,w)\leq \deg_H(c)$ and $0\leq \gamma_{\overrightarrow{T_2}}(c,w)\leq \deg_H(c)$.
Hence $\alpha_{\overrightarrow{T_1},\overrightarrow{T_2}}\neq 0$ if  for any $c\in V[H]$ and $w\in V[G]$ it holds that (i) $\gamma_{\overrightarrow{T_1}}(c,w)=\gamma_{\overrightarrow{T_2}}(c,w)$,
or  (ii) $\gamma_{\overrightarrow{T_1}}(c,w)=\deg_H(c)$,
$\gamma_{\overrightarrow{T_2}}(c,w)=0 $, or (iii) $\gamma_{\overrightarrow{T_1}}(c,w)=0$,
$\gamma_{\overrightarrow{T_2}}(c,w)=\deg_H(c)$.
We partition $V_{\overrightarrow{T_1}\cup \overrightarrow{T_2}}$ into three disjoint subsets $A$, $B$ and $C$ defined by
$
A:= V_{\overrightarrow{T_1}}\setminus V_{\overrightarrow{T_2}}$,
$B:= V_{\overrightarrow{T_2}}\setminus V_{\overrightarrow{T_1}}$, and
$C:= V_{\overrightarrow{T_1}}\cap V_{\overrightarrow{T_2}}$.
Set $A$, $B$, and $C$ are defined according to the above conditions (i), (ii) and (iii). By the assumption that the minimum degree of $H$ is 2, the degree of every vertex in sets $A, B$ and $C$ is at least 2. Since there are $O(1)$ different such $H'$ of constant size, and for each $H'$ of them it holds that $\#(H,G)=O(m^{k/2})$, by Lemma~\ref{upperboundhygraph} we have $\ee[Z_H(G)\cdot \overline{Z_H(G)}]=O(m^k)$.
\end{proof}

By applying Chebyshev's inequality, we can get a $(1\pm \varepsilon)$-approximation by running our estimator in parallel and returning the average of the output of these returned values, and this implies our main theorem~(Theorem~\ref{thm:main}).

\begin{proof}[Proof of Theorem~\ref{thm:main}]
We run $s$ parallel and independent copies of our estimator and take the average value $Z^*=\frac{1}{s}\sum_{i=1}^sZ_i$,
where each $Z_i$ is the output of the $i$th instance of the estimator. Therefore, $\ee[Z^*]=\ee[Z_H(G)]$, and a straightforward calculation shows that\[
\ee\left[Z^*\overline{Z}^{*}\right]-\left|\ee\left[Z^{*}\right]\right|^{2}=
\frac{1}{s}\left(\ee\left[Z_H(G)\cdot\overline{Z_H(G)}\right]-|\ee[Z_H(G)]|^{2}\right)\enspace.\]
By Chebyshev's inequality for complex-valued random variables (see, e.g., \cite[Lemma~3]{conf/esa/ManjunathMPS11}), we have
$$\Pro{\left|Z^*-\ee[Z^*]\right| \geq \varepsilon\cdot|\ee[Z^*]|} \leq \frac{\ee\left[Z_H(G)\cdot\overline{Z_H(G)}\right]-\ee[Z_H(G)]
\cdot\overline{\ee[Z_H(G)]}}{s\cdot\varepsilon^{2}\cdot|\ee[Z_H(G)]|^{2}}\enspace.$$
By the first statement of \thmref{variance},  we have $$\ee\left[Z_H(G)\cdot\overline{Z_H(G)}\right]-\ee[Z_H(G)]\cdot
\overline{\ee[Z_H(G)]} \leq \ee\left[Z_H(G) \cdot\overline{Z_H(G)} \right]=O(m^k)\enspace.$$ By choosing $s=O\left(\frac{1}{\varepsilon^2}\cdot\frac{m^k}{(\# H)^2}\right)$, we get \[\Pro{\left|Z^*-\ee\left[Z^*\right]\right| \geq \varepsilon\cdot|\ee[Z^*]|} \leq 1/3\enspace . \]
Hence, the overall space complexity is
$O\left(\frac{1}{\varepsilon^2}\cdot\frac{m^k}{(\#H)^2}\cdot\log n\right)$.
\end{proof}

\vspace{1em}

\textbf{Acknowledgement.}  The author would like to thank Kurt Mehlhorn for helpful comments on the presentation.

\bibliographystyle{plainnat}
\bibliography{reference}

\end{document}